\def\ra{\rightarrow}
\newcommand{\beq}{\begin{eqnarray}}
\newcommand{\eeq}{\end{eqnarray}}
\newcommand\two{\text{I}\!\text{I}}
\def\frakg{\mathfrak g}
\newcommand\RR{\mathbb R}
\newcommand\CC{\mathbb C}
\newcommand\ZZ{\mathbb Z}
\newcommand\TT{\mathbb T}
\newtheorem{theorem}{Theorem}[section]
\theoremstyle{definition}
\theoremstyle{remark}
\begin{document}

\title{T-duality of singular spacetime compactifications\\ in an H-flux}

\author{Andrew Linshaw}
\address[A. Linshaw]{Department of Mathematics,
University of Denver, Denver, CO 80210, USA}
\email{andrew.linshaw@du.edu}

\author{Varghese Mathai}
\address[V. Mathai]{Department of Pure Mathematics, University of Adelaide,
Adelaide, SA 5005, Australia}
\email{mathai.varghese@adelaide.edu.au}

\begin{abstract}
We begin by presenting a symmetric version of the circle equivariant T-duality result in a joint work of the second author with Siye Wu, thereby generalising the results there. We then initiate the study of twisted equivariant Courant algebroids
and equivariant generalised geometry and apply it to our context.
As before, T-duality exchanges type \two A and type \two B string theories.  
In our theory, both spacetime and the T-dual spacetime can be singular spaces when the
fixed point set  is non-empty; the singularities correspond to Kaluza-Klein monopoles.
We propose that the Ramond-Ramond charges of type \two\ string theories on
the singular spaces are classified by twisted equivariant cohomology groups, consistent with the previous work of Mathai and Wu,
and prove that they are naturally isomorphic. We also establish the corresponding isomorphism 
of twisted equivariant Courant algebroids.
\end{abstract}

\keywords{singular compactifications, equivariant Generalized Geometry, equivariant exact Courant algebroids, twisted equivariant de Rham complex, equivariant T-duality.}

\thanks{{\em Acknowledgements.}
A.L.\ was supported in part by the grant 318755 from the Simons Foundation.
V.M.\ was supported in part by the Australian Research Council's Discovery
Projects funding scheme  DP150100008.
V.M.\ would like to thank Siye Wu  for relevant
discussions in the past.}

\maketitle
\tableofcontents

\section*{Introduction}
In this paper, we will produce an explicit formula for the topology and
H-flux of the T-dual of possibly {\em singular} compactifications of spacetime in  type \two\, string theory,
significantly generalizing earlier results \cite{BEM, BHM, BS, MW12}. 
Our results apply to T-dualities with respect to {\em any} circle actions.
We start with a smooth manifold $M$ having a commuting pair of circle 
actions $\TT$ and $\hat{\TT}$, where $\TT$ acts on the left and $\hat{\TT}$ acts on the right.
Both spacetime $X=M/\hat\TT$ and the  T-dual spacetime $\hat X= \TT\backslash M$ are 
in general stratified spaces, possibly with
boundaries, in the sense of Goresky-McPherson \cite{GM}.
We also assume that 
we are given a flux $[H] \in H^3_{\hat{\TT}}(M, \ZZ)$ in the equivariant cohomology \cite{AB} of $M$, which represents the cohomology of $X$. Then we will produce an explicit  T-dual 
H-flux $[\widehat H]  \in H^3_{{\TT}}(M, \ZZ)$ on $\hat X$.
The singularities of the spacetime and that of the flux form correspond
to Kaluza-Klein monopoles \cite{Sorkin,Pan}. 
Topological aspects of examples from mirror symmetry in the case of Calabi-Yau manifolds,
cf.~\cite{Hori}, fit into our framework, as well as the earlier results of Mathai-Wu \cite{MW12}.
 Consider the 
commutative diagram,

\begin{equation}\label{correspondencea}
\qquad\xymatrix @=8pc @ur { X = M/ \hat{\TT}  \ar[d]_\pi & 
 M   \ar[d]^{\widehat p} \ar[l]_{p} \\
\TT\backslash M /{\hat\TT} & \widehat  X = {\TT}\backslash M \ar[l]^{\widehat \pi}}
\end{equation}

We can replace the singular spaces by their Borel constructions by lifting the
previous diagram to  
\begin{equation}\label{correspondenceb}
\qquad\xymatrix @=8pc @ur { E\TT \times  M_{\hat\TT}   \ar[d]_\pi & 
E\TT \times M  \times E{\hat\TT} \ar[d]^{\widehat p} \ar[l]_{p} \\
 {}_\TT(M_{\hat\TT}) =  ({}_\TT M)_{\hat\TT}  &  {}_\TT M \times E{\hat\TT} \ar[l]^{\widehat \pi}}.
\end{equation}

In this picture, the flux is $[H] \in H^3(M_{\hat\TT}, \ZZ)= H^3_{\hat\TT}(M, \ZZ)$ and the bundles in the diagram above are principal circle bundles,
so we can apply \cite{BEM} to obtain the unique T-dual flux $[\widehat H] \in H^3({}_\TT M, \ZZ) = H^3_{\TT}(M, \ZZ)$. 

More precisely, the conditions that uniquely specify $[\widehat H]$ are:
\begin{enumerate}
\item $\hat\pi_*([\widehat H])=e_\TT\in H^2_\TT(M_{\hat\TT},\ZZ)$, where $e_\TT=\phi$ is the equivariant  Euler class of the principal circle bundle  
$\pi: E\TT \times  M_{\hat\TT}  \to  {}_\TT(M_{\hat\TT})$.
Also $\pi_*([H])=e_{\hat\TT}\in H^2_{\hat\TT}(M_{\TT},\ZZ)$, where $e_{\hat\TT}=\psi$ is the equivariant  Euler class of the principal circle bundle  
$\hat\pi: {}_\TT M  \times E{\hat\TT}  \to   ({}_\TT M)_{\hat\TT}$.\\

\item $p^*([H])={\hat p}^*([\widehat H])\in H^3(M,\ZZ)$.\\
\end{enumerate}

This is proved using 
the {\em Gysin sequence} of the $\TT$-bundle $\hat\pi\colon E\TT\times M_{\hat\TT}\to  {}_\TT(M_{\hat\TT})$ given by
the long exact sequence
$$ \to H^j(M_{\hat\TT})\stackrel{\hat\pi_*}\to H^{j-1}( {}_\TT(M_{\hat\TT}))\stackrel{\cup\,e_\TT}
   \to H^{j+1}( {}_\TT(M_{\hat\TT}))\stackrel{\hat\pi^*}\to H^{j+1}(M_{\hat\TT})\to                 $$
so the Gysin sequence becomes
$$ \to H^j(M_{\hat\TT})\stackrel{\hat\pi_*}\to H^{j-1}_{\TT}(M_{\hat\TT})\stackrel{\cup\,\phi}
   \to H^{j+1}_{\TT}(M_{\hat\TT})\stackrel{\hat\pi^*}\to H^{j+1}(M_{\hat\TT})\to.                $$
   and the corresponding Gysin sequence for the circle bundle, ${}_\TT M  \times E{\hat\TT} \to  {}_\TT(M_{\hat\TT})$
$$ \to H^j({}_\TT M)\stackrel{\pi_*}\to H^{j-1}_{\hat\TT}({}_\TT M )\stackrel{\cup\,\psi}
   \to H^{j+1}_{\hat\TT}({}_\TT M)\stackrel{\pi^*}\to H^{j+1}({}_\TT M)\to.                $$
   
Choosing a connection $A$ for the circle bundle $\pi$, the curvature $F_A$ is then a basic $2$-form which is equivariantly 
closed, and $[F_A] = e_{\TT}=\phi$. If $H$ is chosen to be an invariant $3$-form on $ E \TT \times M_{\hat{\TT}}  $, 
then as in \cite{BEM}, $H = A \wedge \psi + \Omega$, where $ \Omega$ is a basic $3$-form. Then the T-dual flux 
$\hat H = \hat A \wedge \phi + \Omega$, which an invariant, closed $3$-form on $ {}_\TT M \times E \hat{\TT}$.

Using the definition of twisted equivariant cohomology (cf. \cite{MW12}), we deduce the equivariant T-duality isomorphism
given by the Hori-type formula in \cite{BEM},
\begin{equation}\label{T-duality1}
 H^\bullet({}_\TT M, H) \cong H^{\bullet+1}(M_{\hat\TT}, \widehat H),
\end{equation}
 or equivalently,
 \begin{equation}\label{T-duality2}
 H^\bullet_\TT(M, H) \cong H^{\bullet+1}_{\hat\TT}(M, \widehat H).
\end{equation}
Using the proposal in \cite{MW12} that the Ramond-Ramond charges of the singular spacetimes $X=M/\hat\TT$ and $\hat X= \TT\backslash M$
are classified by the respective twisted equivariant cohomology groups, we then see
 in particular that the Ramond-Ramond charges in these theories are naturally isomorphic.

We remark on the infinite dimensional spaces considered here.  The universal $\TT$-space $E\TT$ 
can be chosen to be the direct limit of odd dimensional spheres $S^\infty= \lim S^{2n+1}$, 
which is a smooth countably compactly generated manifold as studied in section 47.2, \cite{KM}, whose 
tangent space at $p$ is the orthogonal complement of $p\in S^\infty \hookrightarrow \RR^\infty$. 
The Borel construction and other infinite dimensional spaces used here are similarly also either 
smooth countably compactly generated manifolds or obtained from these in a simple way.

Cavalcanti-Gualtieri \cite{CG} showed that the T-duality isomorphism of \cite{BEM},  can be extended to an isomorphism of invariant Courant algebroids in that context. 
In this paper, we further extend this to the case considered in this paper, as discussed above when the circle actions are not necessarily free. We show in section \ref{sect:equiv exact courant} that 
T-duality as described above, extends to an isomorphism of  
equivariantly exact Courant algebroids. Then in section \ref{sect:weil model}, we introduce the Weil model  for equivariantly exact Courant algebroids, 
which is a small model for it, and show in section \ref{sect:weil T-duality} that T-duality extends to an isomorphism of  the Weil models for equivariantly 
exact Courant algebroids. 

Recall that the global aspects of T-duality in type \two\,
string theories, involve compactifications of spacetime $X$ with an H-flux. 
The local transformation rules of the low energy effective fields under
T-duality, are known as the Buscher rules \cite{Bus}. 
However, in cases in which there is a topologically nontrivial NS 3-form
H-flux, the Buscher rules only make sense locally.
Examples of T-duals to such backgrounds were studied in 
\cite{AABL,DLP,DRS,GLMW,KSTT}, but the general formula for the topology and
H-flux of the T-dual with respect to any free circle action was presented for the first time in \cite{BEM}, generalizing \cite{Hori1}.  
As supporting evidence, it was shown there that locally, the said formula
agrees with the Buscher rules and that globally it yields an isomorphism
of the twisted cohomology and twisted K-theory-valued conserved Ramond-Ramond
charges \cite{MM,Wib,BM,BS,Pan}. Recent followup work on T-duality include \cite{HM15,LM15}.

\section{A concrete example}
We begin by giving a concrete example of a manifold $M$ with commuting actions of $\TT$ and $\hat{\TT}$, such that the T-dual spaces $X=M/\hat\TT$ and $\hat X= \TT\backslash M$ are both singular spaces with nontrivial fluxes. This shows that the setup in our paper is indeed more general than that of \cite{MW12}.


Let $M = S^3$, regarded as the joint of two circles $S^1 \star S^1$. Consider the 
circle actions $\TT$ by rotation on the first factor, and $\hat{\TT}$ by rotation on the second factor, and assume that they rotate by different speeds. Then the spacetime $X = S^3/ {\hat \TT}$ will be singular, since the $\hat{\TT}$-fixed point set in $M$ is non-empty. Similarly, the T-dual spacetime $\hat{X} = \TT\backslash S^3$ is also singular. Let $H$ denote $k$ times the invariant volume form on $S^3$. 

Next, we show that $S^3$ with the $\TT$-action is equivariantly formal. Recall that the cohomology ring $H^*(B\TT) = H^*(\CC P^{\infty})= \ZZ [t]$ where degree$(t)=2$. Let $D_+$ respectively $D_-$ denote the open upper and lower hemispheres determined by the axis of the $\TT$-action. Both $D_+$ and $D_-$ are $\TT$-invariant, contractible sets, and $H^*_{\TT}(D_{\pm})= H^*_{\TT}(pt) = H^*(B \TT)  = \ZZ[t]$. Also, $\TT$ acts freely on the overlap $D_+ \cap D_-$, so 
$$H_{\TT}^*(D_+ \cap D_-) = H^*((D_+\cap D_-)/\TT)= H^*(pt).$$ We can apply the Mayer-Vietoris exact sequence
to the decomposition 
$$S^3\times_\TT E\TT = (D_+\times_\TT E\TT) \cup (D_-\times_\TT E\TT)$$
and use the previous observations to obtain the long exact sequence,
$$
\ldots\to H^{*-1}(pt) \to H^*_\TT(S^3) \to H^*(B\TT) \oplus H^*(B\TT)  \to H^*(pt) \to \ldots
$$
Comparing dimensions, we conclude that $S^3$ is equivariantly formal for the given action of $\TT$.
An analogous argument shows that $S^3$ is also equivariantly formal for $\hat\TT$-action. In particular
$$H^3_{\TT}(S^3;\ZZ) \cong H^3(S^3;\ZZ) \cong H^3_{\hat\TT}(S^3;\ZZ),$$
and so $[H] \in H^3(S^3;\ZZ)$ determines both $[H_1] \in H^3_{\TT}(S^3;\ZZ)$ and $[H_2] \in H^3_{\hat\TT}(S^3;\ZZ)$, which are nontrivial. We have the following commutative diagram representing T-duality, where the singular spacetimes $S^3/{\hat\TT}$ and $\TT \backslash S^3$ are T-dual.
\begin{equation}
\qquad\xymatrix @=8pc @ur { X = S^3/{\hat\TT}  \ar[d]_\pi & 
 S^3   \ar[d]^{\widehat p} \ar[l]_{p} \\
\TT\backslash   S^3 /{\hat\TT} & \widehat  X =  \TT\backslash S^3  \ar[l]^{\widehat \pi}}
\end{equation}







\section{Hori type formulae} \label{sect:hori}

Here we use the Weil model of twisted equivariant cohomology and write down the Hori type formulae inducing
the T-duality isomorphisms \eqref{T-duality1} and \eqref{T-duality2} on the level of representatives.

We first set up notation. Let $\frakg$ denote the Lie algebra of $\TT$. Consider the Weil algebras $$W= W(\frakg) = S(\frakg^*) \otimes \Lambda(\frakg^*),\qquad 
\hat W= W(\hat\frakg) = S(\hat\frakg^*) \otimes \Lambda(\hat\frakg^*).$$ Here $W$ is the differential graded algebra (DGA) generated by a degree one
element $\theta$ and a degree 2 element $\mu$ satisfying the relations,
\begin{align*}
& \iota_W \theta=1, \qquad   \iota_W \mu =0,\\
& d_W \theta = \mu, \qquad d_W \mu=0.
\end{align*}
Recall that the {\it Weil model} for $H^*_{\TT}(M)$ is the cohomology $$H^*((\Omega(M) \otimes W)_{\text{bas}}, d).$$ Here $d$ is the total differential $d_M + d_W$ where $d_M$ is the de Rham differential, and the basic subcomplex $(\Omega(M) \otimes W)_{\text{bas}}$ is the subspace which is $\TT$-invariant and annihilated by the operator $\iota_V + \iota_{W}$. Here $V$ is the vector field on $M$ generated by the action of $\TT$. Then $H^*((\Omega(M) \otimes W)_{\text{bas}}, d) \cong H^*_{\TT}(M)$ with real coefficients. Note that by identifying $\Omega(M) \otimes W$ with a subcomplex of $\Omega(M \times E\TT) \cong \Omega(M) \otimes \Omega(E\TT)$, we can regard $(\Omega(M) \otimes W)_{\text{bas}}$ as a subcomplex of $\Omega(M \times E\TT)_{\text{bas}} \cong \Omega(M_{\TT})$.

Similarly, let $\hat\frakg$ denote the Lie algebra of $\hat\TT$, and let $\hat W $ be the DGA generated by a degree one
element $\hat\theta$ and a degree 2 element $\hat\mu$ satisfying the relations,
\begin{align*}
& \iota_{\hat{W}} \hat\theta=1, \qquad   \iota_{\hat{W}} \hat\mu =0,\\
& d_{\hat{W}}\hat\theta = \hat\mu, \qquad d_{\hat{W}} \hat\mu=0.
\end{align*}
We shall use the notation $\TT-\text{bas}$ and $\hat{\TT}-\text{bas}$ to denote the basic subcomplexes with respect to $\TT$ and $\hat{\TT}$, respectively. In diagram \ref{correspondenceb}, the following are the Weil models for the associated spaces.
\begin{enumerate}
\item $\Omega(M) \otimes W \otimes \hat W$ is the Weil model for $E\TT \times M  \times E{\hat\TT} $,
\item $(\Omega(M) \otimes \hat W)_{ \hat{\TT}-\text{bas}} \otimes W$  is the Weil model for $M_{\hat\TT}  \times E{\TT} $,
\item $(\Omega(M) \otimes W)_{\TT-\text{bas}} \otimes \hat W$  is the Weil model for $ {}_\TT M  \times E{\hat\TT} $,
\item $(\Omega(M) \otimes W \otimes \hat W)_{\TT\times \hat{\TT}-\text{bas}}$  is the Weil model for $ {}_\TT M_{\hat\TT} $.
\end{enumerate}
Note that if we replace the Weil models (1), (2), and (3) above by their invariant subspaces with respect to $\TT\times \hat{\TT}$, $\TT$, and $\hat{\TT}$, respectively, the cohomology is unchanged. The corresponding commutative diagram of Weil models is then

\begin{equation}
\qquad\xymatrix @=8pc @ur { ((\Omega(M) \otimes \hat W)_{\hat{\TT}-\text{bas}} \otimes W)^{\TT}  \ar[d]_\iota & 
(\Omega(M) \otimes W \otimes \hat W)^{\TT \times \hat{\TT}} \ar[d]^{\iota} \ar[l]_{\hat\iota} \\
 (\Omega(M) \otimes W \otimes \hat W)_{\TT\times \hat{\TT}-\text{bas}}  &  ((\Omega(M) \otimes W)_{\TT-\text{bas}} \otimes \hat W)^{\hat{\TT}} \ar[l]^{\hat\iota}}.
\end{equation}
Here $\iota = \iota_V + \iota_W$ and $\hat\iota = \iota_{\hat{V}} + \iota_{\hat{W}}$, where $V$ and $\hat{V}$ are the vector field on $M$ generated by the action of $\TT$ and $\hat{\TT}$, respectively. We now have total differential $d = d_M + d_W + d_{\hat{W}}$. Suppose that $H \in (\Omega(M) \otimes W \otimes \hat W)^{\TT\times \hat{\TT}}$ satisfies $d H=0$ and $\hat{\iota} H = 0$, so that
$$
H \in ((\Omega(M) \otimes \hat W)_{\hat{\TT}-\text{bas}} \otimes W)^{\TT}.
$$

Let $A$ be a connection on the principal $\TT$-bundle $\pi$ as in diagram \ref{correspondenceb}. Then 
$\iota A =1$ and $(d\iota + \iota d)A=0$ so that $\iota d A=0$. Therefore   
$$
A \in ((\Omega(M) \otimes \hat W)_{\hat{\TT}-\text{bas}} \otimes W)^{\TT}.
$$
The curvature of $A$ is $F = dA$, where $dF=0$ and $\iota F=0$ so that $(d\iota + \iota d)F=0$. Therefore
$$
F \in (\Omega(M)  \otimes W \otimes \hat W)_{\TT\times \hat{\TT} - \text{bas}}.
$$
We can write the connection $A$ in terms of the universal connection in the form 
$$A = f \theta + \alpha,$$ where $f$ is a smooth function on $M$, $\theta$ is the connection on the circle bundle  $E\TT \to B\TT,\, \alpha$ is a $1$-form on $M$. Then the generator $V$ of the circle action on $E\TT \times M$ is of the form $V = V_1 + V_2$, where $V_1$ is the generator of the circle action on $E\TT$ and $V_2$ is the generator of the circle action on $M$. Since $A$ is a connection, we must have 
$$
0 = L_V A \Rightarrow V_2(f)=0, \text{and}\, L_{V_2} \alpha=0.
$$
Also $\iota_V(A)=1 \Rightarrow \iota_{V_2} \alpha = 1- f.$

Similarly let $\hat A$ be a connection on the principal $\hat\TT$-bundle $\hat\pi$ as in diagram \ref{correspondenceb}. Then 
$\hat\iota \hat A =1$ and $(d\hat\iota + \hat\iota d)\hat A=0$ so that $\hat\iota d \hat A=0$. Therefore   
$$
\hat A \in ((\Omega(M) \otimes W)_{\TT - \text{bas}} \otimes \hat W)^{\hat{\TT}}.
$$
Similarly, if $\hat F= d \hat A$ is the curvature of $\hat A$, then $d\hat F=0$ and $\hat\iota \hat F=0$ so that $(d\hat\iota + \hat\iota d)\hat F=0$. Therefore
$$
\hat F \in (\Omega(M)  \otimes W \otimes \hat W)_{\TT\times \hat{\TT} - \text{bas}}.
$$
The one constraint imposed on $\hat A$ is that $\iota H = \hat F$. Then $H = A \wedge \hat F + \Omega$ for some 
$$
\Omega \in (\Omega(M)  \otimes W \otimes \hat W)_{\TT\times \hat{\TT} - \text{bas}}.
$$
We now define the T-dual flux by $$\hat H = F \wedge \hat A + \Omega.$$ Then $\hat\iota \hat H= F$, $\iota H = 0$, $d\hat H=0$, and 
$(d\hat\iota + \hat\iota d)\hat H=0$ so that 
$$
\hat H \in ((\Omega(M) \otimes W)_{\TT - \text{bas}} \otimes \hat W)^{\hat{\TT}}.
$$

Next we describe the Hori formula in \cite{BEM} in terms of the Weil complex.
Let $$G \in ((\Omega(M) \otimes \hat W)_{\hat{\TT} - \text{bas}} \otimes W)^{\TT}.$$ Then the Hori formula in this context is given by the formula,
$$
T(G) = \iota( ( 1 + A \wedge \hat A)G).
$$
Then one checks that
$T(G) \in  ((\Omega(M) \otimes W)_{\TT - \text{bas}} \otimes \hat W)^{\hat{\TT}}$, and that the map 
\begin{equation} \label{tdualitylinear} T: ((\Omega(M) \otimes \hat W)_{\hat{\TT} - \text{bas}} \otimes W)^{\TT} \ra  ((\Omega(M) \otimes W)_{\TT - \text{bas}} \otimes \hat W)^{\hat{\TT}} \end{equation} is a linear isomorphism, cf. \cite{MW11}. Moreover if $d_H = d + H\wedge$ is the twisted differential, and 
$d_H G=0$, then $d_{\hat H} T(G)=0$, so that $T$ is a chain map, defining
a map in twisted equivariant cohomology,
$$
T : H_{\hat\TT}^\bullet(M, H) \longrightarrow H_{\TT}^{\bullet-1}(M, \hat H) 
$$
which was shown in \cite{BEM} to be an isomorphism.

\section{Equivariant standard and exact Courant algebroids}\label{sect:equiv exact courant}

We will consider two equivalent models of equivariant Courant algebroids and their twisted analogues. First, given a manifold $M$, the standard Courant algebroid is defined to be $E= TM \oplus T^*M$, with Courant bracket 
$$\langle X + \xi, Y + \eta\rangle = \frac{1}{2}( \iota_X \eta + \iota_Y \xi),$$ $$[X + \xi, Y + \eta] = [X,Y] + L_X\eta - L_Y \xi - \frac{1}{2} d(\iota_X \eta - \iota_Y \xi).$$
$E$ is also known as the {\em generalized tangent bundle} in generalized geometry. 

More generally, one considers {\em exact} equivariant Courant algebroids $E$, that is, there is an exact sequence of vector bundles
$$0\ra T^*M \ra E \ra TM \ra 0,$$ where the last map is $\pi$. Exact Courant algebroids are classified by $H^3(M)$, \cite{Severa1,Severa2}. For every such $E$, there is a splitting $E \cong TM \oplus T^*M$ and a closed $3$-form $H \in \Omega^3(M)$ such that the bilinear form and bracket are given by
\begin{equation} \label{twistedbracket} [X + \xi, Y + \eta]_H = [X,Y] + L_X\eta - L_Y \xi - \frac{1}{2} d(\iota_X \eta - \iota_Y \xi)+ \iota_Y \iota_X H.\end{equation}
For any $H$, $\Omega^*(M)$ is a Clifford module over $TM \oplus T^*M$ via $$(X+\xi)\cdot \omega = \iota_X (\omega) + \xi \wedge \omega.$$ Define the {\it twisted de Rham differential} $d_H$ on $\Omega^*(M)$ by $$d_H(\omega) = d(\omega) + H\wedge \omega.$$ Note that $d_H$ is not a derivation of $\Omega^*(M)$ regarded as an algebra, but it is a derivation of $\Omega^*(M)$ regarded as a left $\Omega^*(M)$-module. In particular, for homogeneous $a,\omega \in \Omega^*(M)$, we have $d_H(a\omega) = d(a) \omega + (-1)^{|a| |\omega|} a d_H(\omega)$. Finally, for $X,Y \in \text{Vect}(M)$, $\xi,\eta \in \Omega^1(M)$, and $\omega \in \Omega^*(M)$, we have
\begin{equation} \label{twistedbracket} [X+\xi, Y+\eta]_H \cdot \omega = [[d_H,X+\xi],Y+\eta]\cdot \omega. \end{equation}

Suppose now that $M$ has commuting actions of $\TT$ and $\hat{\TT}$, and we have T-dual fluxes $[H] \in H^3(M_{\hat{\TT}},\mathbb{Z})$ and $[\hat{H}] \in H^3({}_\TT M,\mathbb{Z})$. Also, assume that we have chosen connections $A$ and $\hat{A}$ on the bundles $\pi$ and $\hat{\pi}$ in diagram \eqref{correspondenceb}, such that $\iota H = \hat{F} = d \hat{A}$ and $\hat{\iota} \hat{H} = F = dA$. Then the quotients $$\big(T(E\TT \times M_{\hat{\TT}}) \oplus T^*(E\TT \times M_{\hat{\TT}})\big)/\TT,\qquad \big(T({}_\TT M  \times E \hat{\TT}) \oplus T^*({}_\TT M \times E \hat{\TT})\big)/\hat{\TT}$$ 
are again (twisted) Courant algebroids. By a result of Cavalcanti-Gualtieri \cite{CG}, there is an isomorphism of twisted Courant algebroids
\begin{equation} \label{cavgual} \tau: \big(T(E\TT \times M_{\hat{\TT}}) \oplus T^*(E\TT \times M_{\hat{\TT}})\big)/\TT \ra \big(T({}_\TT M  \times E \hat{\TT}) \oplus T^*({}_\TT M  \times E \hat{\TT})\big)/\hat{\TT},\end{equation}
that is compatible with the previously described T-duality isomorphism \eqref{tdualitylinear}. This map can be described explicitly as follows. The connection form $A$ on $E\TT \times M_{\hat{\TT}}$ gives rise to a splitting of the space of $\TT$-invariant $1$-forms $\Omega^1(E\TT \times M_{\hat{\TT}})^{\TT}$. Such a $1$-form can be uniquely written in the form $\omega = \omega_0 + fA$ where $\omega_0 \in \Omega^1({}_{\TT} M_{\hat{\TT}})$ and $f \in C^{\infty}({}_{\TT} M_{\hat{\TT}})$. Similarly, if we fix a $\TT$-invariant vector field $Y_A$ on $E\TT \times M_{\hat{\TT}}$ which is dual to $A$ in the sense that $\iota_{Y_A}(A) = 1$, we can write a $\TT$-invariant vector field on $E\TT \times M_{\hat{\TT}}$ in the form $Y = Y_0 + g Y_A$, where $g \in C^{\infty}(E\TT \times M_{\hat{\TT}})$, and $Y_0$ satisfies $\iota_{Y_0}(A) = 0$. Then the isomorphism \eqref{cavgual} is given by
 \begin{equation} \label{defoftau} \tau (Y_0 + g Y_A, \omega_0 + f A) = (Y_0 + f Y_{\widehat{A}}, \omega_0 + g \widehat{A}).\end{equation}

\section{Weil model for equivariant Courant algebroids}\label{sect:weil model}

Based on the Weil model of equivariant cohomology, we give a simpler notion of a (twisted) equivariant Courant algebroid on $M$. As in Section \ref{sect:hori}, suppose that $M$ has commuting actions of $\TT$ and $\hat{\TT}$ and we have T-dual fluxes $[H] \in H^3(M_{\hat{\TT}},\mathbb{Z})$ and $[\hat{H}] \in H^3(M_{\TT},\mathbb{Z})$. Also, assume that we have chosen connections $A$ and $\hat{A}$ on the bundles $\pi$ and $\hat{\pi}$ in diagram \eqref{correspondenceb}, such that $\iota H = \hat{F} = d \hat{A}$ and $\hat{\iota} \hat{H} = F = dA$. 
 
First, we define the Weil model of $T(E\TT \times M \times E \hat{\TT}) \oplus T^*(E\TT \times M \times E \hat{\TT})$, which will be a Courant algebroid on $M$. Define a vector bundle
$$ T^*M \oplus \mathbb{I}_{\theta} \oplus \mathbb{I}_{\hat{\theta}}$$ on $M$, where $\mathbb{I}_{\theta}$ and $\mathbb{I}_{\hat{\theta}}$ are trivial rank one vector bundles on $M$. The space of smooth sections $\Gamma(T^*M \oplus \mathbb{I}_{\theta} \oplus \mathbb{I}_{\hat{\theta}})$ is just $\Omega^1(M) \oplus C^{\infty}(M)\theta \oplus C^{\infty}(M) \hat{\theta}$, where $\theta$ and $\hat{\theta}$ are constant sections of $\mathbb{I}_{\theta}$ and $\mathbb{I}_{\hat{\theta}}$ respectively.

We can identify $\Gamma(T^*M \oplus \mathbb{I}_{\theta}\oplus \mathbb{I}_{\hat{\theta}})$ with a subspace of $\Gamma(T^*(E\TT \times M \times E \hat{\TT}))$ as follows. Let $$\pi_1: E\TT \times M \times E \hat{\TT} \ra E\TT,\qquad \pi_2: E\TT \times M \times E \hat{\TT} \ra M,\qquad \pi_3:E\TT \times M \times E \hat{\TT} \ra E \hat{\TT},$$ denote the projection maps. Then $\Gamma(T^*M \oplus \mathbb{I}_{\theta} \oplus \mathbb{I}_{\hat{\theta}})$ can be identified with 
$$\pi^*_2(\Gamma(T^*M)) \bigoplus \bigg(\pi^*_2(C^{\infty}(M)) \otimes \pi^*_1(\Gamma_0(\mathbb{I}))\bigg) \bigoplus \bigg(\pi^*_2(C^{\infty}(M)) \otimes \pi^*_3(\Gamma_0(\hat{\mathbb{I}}))\bigg).$$ Here $\mathbb{I}$ and $\hat{\mathbb{I}}$ are the trivial rank one subbundles generated by the connection form $\theta$ on $E\TT$, and the connection form $\hat{\theta}$ on $E \hat{\TT}$, respectively, and $\Gamma_0(\mathbb{I})$ and $\Gamma_0(\hat{\mathbb{I}})$ denote the spaces of constant sections. Let $\Theta = \pi_1^*(\theta)$ and $\hat{\Theta} = \pi_3^*(\hat{\theta})$ be the corresponding forms in $\Omega^1(E\TT \times M \times E \hat{\TT})$.

Next, we define another vector bundle 
$$TM \oplus \mathbb{I}_{\theta^*} \oplus \mathbb{I}_{\hat{\theta}^*}$$ on $M$, where $\mathbb{I}_{\theta^*}$ and $\mathbb{I}_{\hat{\theta}^*}$ denote again trivial rank one bundles on $M$. We can identify $\Gamma(TM \oplus \mathbb{I}_{\theta^*} \oplus \mathbb{I}_{\hat{\theta}^*})$ with a subspace of $\Gamma(T(E\TT \times M \times E \hat{\TT}))$ as follows. First, a section of $TM \oplus \mathbb{I}_{\theta^*} \oplus \mathbb{I}_{\hat{\theta}^*}$ has the form $$Y + f \theta^* + g \hat{\theta}^*,$$ where $Y$ is a smooth vector field on $M$, $f, g \in C^{\infty}(M)$, and $\theta^*$, $\hat{\theta}^*$ are constant sections of $\mathbb{I}_{\theta^*}$, $\mathbb{I}_{\theta^*}$, respectively. 

Choose vector fields $\Theta^*$ and $\hat{\Theta}^*$ on $E\TT \times M \times E \hat{\TT}$ which are dual to $\Theta$ and $\hat{\Theta}$, respectively. We may identify $Y + f \theta^* + g \hat{\theta}^*$ with a section of $T(E\TT \times M \times E \hat{\TT})$ of the from 
$$\tilde{Y} + \pi^*_2(f) \Theta^* +  \pi^*_2(g) \hat{\Theta}^*,$$ where $\tilde{Y}$ is a vector field on $E\TT \times M \times E \hat{\TT}$ satisfying $(\pi_2)_*(\tilde{Y}) = Y$. Furthermore, we may choose $\tilde{Y}$ such that 
\begin{equation} \label{eq:condition1} [\tilde{Y}, \Theta^*] = 0= [\tilde{Y}, \hat{\Theta}^*]  ,\qquad \iota_{\tilde{Y}}(\Theta) = 0 =\iota_{\tilde{Y}}(\hat{\Theta}).\end{equation} 

Note that if $\tilde{Y}_1, \tilde{Y}_2$ satisfy \eqref{eq:condition1}, so does $[\tilde{Y}_1, \tilde{Y}_2]$. Also, if $f \in C^{\infty}(M)$, then for all $f \in C^{\infty}(M)$,
\begin{equation} \label{eq:condition2}\tilde{Y} (\pi_2^*(f)) = \pi_2^*(Yf).\end{equation}
Given two sections $Y + f_1 \theta^* + f_2 \hat{\theta}^*$ and $Y' + f'_1 \theta^*+ f'_2 \hat{\theta}^*$ of $TM \oplus \mathbb{I}_{\theta^*}\oplus \mathbb{I}_{\hat{\theta}^*}$, we have the bracket operation $$[Y + f_1 \theta^* + f_2 \hat{\theta}^*, Y' + f'_1 \theta^* + f'_2 \hat{\theta}^*] = [Y,Y']+ (Yf'_1 - Y' f_1)\theta^* +  (Yf'_2- Y' f_2) \hat{\theta}^*,$$ which is compatible with the bracket in $T(E\TT \times M \times E \hat{\TT})$ under the above identification by \eqref{eq:condition1} and \eqref{eq:condition2}.

Next, we define a Courant bracket $[,]$ on $(TM \oplus \mathbb{I}_{\theta^*} \oplus \mathbb{I}_{\hat{\theta}^*}) \oplus (T^*M \oplus \mathbb{I}_{\theta} \oplus \mathbb{I}_{\hat{\theta}})$ by
\begin{equation} \label{eq:courant} \begin{split} [Y + f_1 \theta^* + f_2 \hat{\theta}^*+ \xi + g_1 \theta + g_2 \hat{\theta}, Y' + f_1' \theta^* + f'_2 \hat{\theta}^*+ \xi' + g'_1 \theta + g'_2 \hat{\theta}]
\\ =  [Y,Y'] +  (Y f'_1- Y' f_1)\theta^* +  (Y f'_2- Y' f_2)\hat{\theta}^* 
\\ + L_Y \xi' - L_{Y'} \xi + d \bigg( \iota_Y \xi' - \iota_{Y'} \xi + f_1g'_1 - f'_1g_1 + f_2g'_2 - f'_2g_2 \bigg)
\\ + (Yg'_1 - Y' g_1) \theta + (Yg'_2 - Y' g_2) \hat{\theta} . \end{split} \end{equation} Here $f_1, f_2,g_1, g_2,f'_1, f'_2, g'_1, g'_2$ are smooth functions on $M$, $Y,Y'$ are smooth vector fields on $M$, and $\xi, \xi'$ are smooth $1$-forms on $M$. To see that this defines a Courant algebroid structure, it suffices to note that if we identify $\Gamma(TM \oplus \mathbb{I}_{\theta^*}\oplus \mathbb{I}_{\hat{\theta^*}}) \oplus \Gamma(T^*M \oplus \mathbb{I}_{\theta}\oplus \mathbb{I}_{\hat{\theta}})$ with a subspace of $\Gamma(T(E\TT \times M \times E \hat{\TT}) )\oplus \Gamma(T^*(E\TT \times M \times E \hat{\TT}))$ as above, this subspace is closed under the Courant bracket, which is given by \eqref{eq:courant}.

\section{Basic and invariant subbundles}

First, we define $(T^*M \oplus \mathbb{I}_{\theta} \oplus \mathbb{I}_{\hat{\theta}})_{\TT-\text{bas}}$ to be the subbundle of $T^*M \oplus \mathbb{I}_{\theta}\oplus \mathbb{I}_{\hat{\theta}}$ whose space of smooth sections is $\TT$-invariant and annihilated by $\iota_V+ \iota_{\theta}$. In this notation, $\iota_V$ is the contraction on $T^*M$ with respect to the vector field $V$ generated by the $\TT$-action, and $\iota_{\theta}$ is defined by $\iota_{\theta} (\xi + f \theta + g \hat{\theta}) = f$.

Under the identification of $\Gamma(T^*M \oplus \mathbb{I}_{\theta}\oplus \mathbb{I}_{\hat{\theta}})$ with the degree one part of $\Omega(M) \otimes W \otimes \hat{W}$, this identifies $\Gamma((T^*M \oplus \mathbb{I}_{\theta}\oplus \mathbb{I}_{\hat{\theta}})_{\TT-\text{bas}})$ with the degree one part of $(\Omega(M) \otimes W)_{\TT-\text{bas}} \otimes \hat{W}$. This allows us to identify $\Gamma((T^*M \oplus \mathbb{I}_{\theta} \oplus \mathbb{I}_{\hat{\theta}})_{\TT-\text{bas}})$ with a subspace of $\Gamma(T^*({}_{\TT} M \times E\hat{\TT}))$.

Similarly, we define $(TM \oplus \mathbb{I}_{\theta^*} \oplus \mathbb{I}_{\hat{\theta}^*})_{\text{bas}}$ to be the subbundle of $(TM \oplus \mathbb{I}_{\theta^*}\oplus \mathbb{I}_{\hat{\theta}^*})$ whose space of smooth sections $Y + f \theta^*+ g \hat{\theta}^*$ is $\TT$-invariant and satisfies
$$\iota_{Y + f \theta^*+ g \hat{\theta}^*}(\theta) = 0.$$ Under the identification of $\Gamma(TM \oplus \mathbb{I}_{\theta^*} \oplus \mathbb{I}_{\hat{\theta}^*})$ with a subspace of $\Gamma (T(E\TT \times M \times E \hat{\TT}))$, this identifies $\Gamma((TM \oplus \mathbb{I}_{\theta^*} \oplus \mathbb{I}_{\hat{\theta}^*})_{\TT-\text{bas}})$ with a subspace of $\Gamma(T({}_{\TT} M \times E\hat{\TT}))$.

Finally, we define
 \begin{equation} \label{eq:courantbas} \big((TM \oplus \mathbb{I}_{\theta^*}\oplus \mathbb{I}_{\hat{\theta}^*})  \oplus (T^*M \oplus \mathbb{I}_{\theta} \oplus \mathbb{I}_{\hat{\theta}})\big)_{\TT-\text{bas}} = (TM \oplus \mathbb{I}_{\theta^*}\oplus \mathbb{I}_{\hat{\theta}^*})_{\TT-\text{bas}} \oplus (T^*M \oplus \mathbb{I}_{\theta}\oplus \mathbb{I}_{\hat{\theta}})_{\TT-\text{bas}}.\end{equation} Then the $\hat{\TT}$-invariant subspace 
\begin{equation} \label{eq:courantbasinv} \big((TM \oplus \mathbb{I}_{\theta^*}\oplus \mathbb{I}_{\hat{\theta}^*})  \oplus (T^*M \oplus \mathbb{I}_{\theta} \oplus \mathbb{I}_{\hat{\theta}})\big)^{\hat{\TT}}_{\TT-\text{bas}}\end{equation} is closed under the Courant bracket \eqref{eq:courant}, and can be identified with a Courant subalgebroid of $$\big(T({}_{\TT} M \times E \hat{\TT}) \oplus T^*({}_{\TT} M \times E \hat{\TT})\big)/\hat{\TT},$$ as above. Note that the Courant algebroid \eqref{eq:courantbasinv} acts by Clifford multiplication on the $\mathbb{Z}/2\mathbb{Z}$-graded complex 
$$\big((\Omega(M) \otimes W)_{\TT-\text{bas}} \otimes \hat{W} \big)^{\hat{\TT}}.$$
There is a similar definition of the Courant algebroid
 $$\big((TM \oplus \mathbb{I}_{\theta^*}\oplus \mathbb{I}_{\hat{\theta}^*})  \oplus (T^*M \oplus \mathbb{I}_{\theta} \oplus \mathbb{I}_{\hat{\theta}})\big)^{\TT}_{\hat{\TT}-\text{bas}},$$ which acts by Clifford multiplication on the $\mathbb{Z}/ 2\mathbb{Z}$-graded complex
 $$\big((\Omega(M) \otimes \hat{W})_{\hat{\TT}-\text{bas}} \otimes W  \big)^{\TT}.$$

\section{Twisted versions}

Let $[H] \in H^3(M_{\hat{\TT}},\mathbb{Z})$ and $[\hat{H}] \in H^3(M_{\TT},\mathbb{Z})$ be T-dual fluxes as above. Fix representatives $H \in ((\Omega(M) \otimes \hat{W})_{\hat{\TT}-\text{bas}} \otimes W)^{\TT}$ and $\hat{H} \in ((\Omega(M) \otimes W)_{\TT-\text{bas}} \otimes  \hat{W})^{\hat{\TT}}$, which are equivariantly closed $3$-forms in the Weil model.
 
Recall that as a $\mathbb{Z}/2\mathbb{Z}$-graded complex, $((\Omega(M) \otimes \hat{W})_{\hat{\TT}-\text{bas}} \otimes W)^{\TT}$ has the $H$-twisted differential $$d_H = d+ H \wedge  = d_M + d_W + d_{\hat{W}}  + H\wedge,$$ and the $\mathbb{Z}/2\mathbb{Z}$-graded cohomology $$H^{\bullet}(((\Omega(M) \otimes \hat{W})_{\hat{\TT}-\text{bas}} \otimes W)^{\TT}), d_H)$$ is called the {\it twisted $\TT$-equivariant cohomology} of $M$. It coincides with the usual twisted cohomology $H^{\bullet}(M_{\TT},H)$. Similarly, $((\Omega(M) \otimes W)_{\TT-\text{bas}} \otimes  \hat{W})^{\hat{\TT}}$ has the $\hat{H}$-twisted differential $d_{\hat{H}} = d_M + d_W + d_{\hat{W}} +  \hat{H} \wedge$, and its $\mathbb{Z}/2\mathbb{Z}$-graded cohomology is just $H^{\bullet}(M_{\hat{\TT}},\hat{H})$.

There is also an $H$-twisted version $[,]_H$ of the Courant algebroid \eqref{eq:courantbasinv}. It has bracket
\begin{equation} \label{eq:twistedcourant} \begin{split} [Y + f_1 \theta^* + f_2 \hat{\theta}^*+ \xi + g_1 \theta + g_2 \hat{\theta}, Y' + f_1' \theta^* f'_2 \hat{\theta}^*+ \xi' + g'_1 \theta + g'_2 \hat{\theta}]_H & \\ =  [Y + f_1 \theta^* + f_2 \hat{\theta}^*+ \xi + g_1 \theta + g_2 \hat{\theta}, Y' + f_1' \theta^* f'_2 \hat{\theta}^*+ \xi' + g'_1 \theta + g'_2 \hat{\theta}]& \\   + \iota_{Y'+ f'_1\theta^* + f'_2 \hat{\theta}^*} \iota_{Y+f_1 \theta^*+f_2 \hat{\theta}^*} H.\end{split} \end{equation}
As in the untwisted case, the $H$-twisted version of \eqref{eq:courantbasinv} acts by Clifford multiplication on $$\big((\Omega(M) \otimes W)_{\TT-\text{bas}} \otimes \hat{W} \big)^{\hat{\TT}}.$$ The action is compatible with the twisted differential $d_H$, in the sense of \eqref{twistedbracket}. Finally, there is a similar $\hat{H}$-twisted version of $$\big((TM \oplus \mathbb{I}_{\theta^*}\oplus \mathbb{I}_{\hat{\theta}^*})  \oplus (T^*M \oplus \mathbb{I}_{\theta} \oplus \mathbb{I}_{\hat{\theta}})\big)^{\TT}_{\hat{\TT}-\text{bas}}.$$

\section{T-duality of Weil models}\label{sect:weil T-duality}

Suppose that $M$ has commuting actions of $\TT$ and $\hat{\TT}$, and we have T-dual fluxes $[H] \in H^3(M_{\hat{\TT}},\mathbb{Z})$ and $[\hat{H}] \in H^3(M_{\TT},\mathbb{Z})$ as in Section \ref{sect:hori}. Also, assume that we have chosen connections $A$ and $\hat{A}$ on the bundles $\pi$ and $\hat{\pi}$ in diagram \eqref{correspondenceb}, such that $\iota H = \hat{F} = d \hat{A}$ and $\hat{\iota} \hat{H} = F = dA$.

\begin{theorem} 
Suppose that $M$ has commuting actions of $\TT$ and $\hat{\TT}$ and we have T-dual fluxes $[H] \in H^3(M_{\hat{\TT}},\mathbb{Z})$ and $[\hat{H}] \in H^3(M_{\TT},\mathbb{Z})$. Then we have an isomorphism of twisted Courant algebroids
$$\big((TM \oplus \mathbb{I}_{\theta^*}\oplus \mathbb{I}_{\hat{\theta}^*})  \oplus (T^*M \oplus \mathbb{I}_{\theta} \oplus \mathbb{I}_{\hat{\theta}})\big)^{\TT}_{\hat{\TT}-\text{bas}} \ra \big((TM \oplus \mathbb{I}_{\theta^*}\oplus \mathbb{I}_{\hat{\theta}^*})  \oplus (T^*M \oplus \mathbb{I}_{\theta} \oplus \mathbb{I}_{\hat{\theta}})\big)^{\hat{\TT}}_{\TT-\text{bas}}.
$$
which is compatible with the isomorphism $T$ of twisted $\mathbb{Z}_2$-graded Weil complexes given by \eqref{tdualitylinear}, in the obvious way.
\end{theorem}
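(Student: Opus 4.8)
The plan is to realize the desired map as the Weil-model incarnation of the Cavalcanti--Gualtieri isomorphism $\tau$ of \eqref{defoftau}, and to derive all of its properties by conjugating the Clifford action through the Hori isomorphism $T$ of \eqref{tdualitylinear}. Concretely, I would define a bundle map $\Phi$ between the two Courant algebroids that fixes the base data $Y\in TM$ and $\xi\in T^{*}M$ (hence the anchor), and on the live $\TT$-fibre directions performs the T-duality swap
$$
f_{1}\theta^{*}+g_{1}\theta\;\longmapsto\;g_{1}\hat\theta^{*}+f_{1}\hat\theta ,
$$
landing in the live $\hat\TT$-fibre directions of the target; the remaining components (the $\hat\TT$-fibre part on the source and the $\TT$-fibre part on the target) are pinned down by the basic conditions $\iota H=\hat F=d\hat A$ and $\hat\iota\hat H=F=dA$. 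Under the dictionary $Y_{A}\leftrightarrow\theta^{*}$, $A\leftrightarrow\theta$, $Y_{\hat A}\leftrightarrow\hat\theta^{*}$, $\hat A\leftrightarrow\hat\theta$ this is exactly $\tau$, now written inside the Weil complex.

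Rather than merely postulate $\Phi$, the cleanest route is to let $T$ dictate it. Since $T$ is a linear isomorphism from $((\Omega(M)\otimes\hat W)_{\hat\TT\text{-bas}}\otimes W)^{\TT}$ to $((\Omega(M)\otimes W)_{\TT\text{-bas}}\otimes\hat W)^{\hat\TT}$, for a section $s$ of the source algebroid I would \emph{define} $\Phi(s)$ to be the operator $T\circ c(s)\circ T^{-1}$ on the target complex, where $c(s)$ denotes Clifford multiplication; compatibility with $T$ then holds by construction. The substantive step, which I expect to be the main obstacle, is to check that this conjugated operator is again Clifford multiplication by a \emph{genuine} element of the target Courant algebroid, lying in the $\TT$-basic, $\hat\TT$-invariant subbundle, and coinciding with the swap above. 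This is a direct computation from the explicit form $T(G)=\iota\big((1+A\wedge\hat A)G\big)$ with $\iota=\iota_{V}+\iota_{W}$: one commutes each generator $\iota_{Y}$, $\xi\wedge$, $\iota_{W}$, $\theta\wedge$ past multiplication by $(1+A\wedge\hat A)$ followed by $\iota$, and reads off that the wedge and contraction generators get interchanged precisely as stated. The careful bookkeeping of the basic conditions guarantees that the output satisfies the target's basicity and simultaneously produces the inverse of $\Phi$, so that $\Phi$ is a bundle isomorphism.

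Finally I would deduce preservation of the bilinear form and of the twisted bracket \eqref{eq:twistedcourant} formally. The form is recovered as half the anticommutator of Clifford actions, $\{c(s),c(s')\}=2\langle s,s'\rangle\,\mathrm{id}$, and scalar operators are unchanged under conjugation by $T$, so $\langle\Phi(s),\Phi(s')\rangle=\langle s,s'\rangle$. For the bracket I would invoke the derived-bracket description \eqref{twistedbracket}, $[s,s']_{H}\cdot\omega=[[d_{H},s],s']\cdot\omega$, together with the chain-map property $T\,d_{H}=d_{\hat H}\,T$ established in Section \ref{sect:hori}. Writing $c(\Phi(s))=T\,c(s)\,T^{-1}$ and $d_{\hat H}=T\,d_{H}\,T^{-1}$ gives
$$
[[d_{\hat H},\,c(\Phi(s))],\,c(\Phi(s'))]=T\,[[d_{H},c(s)],c(s')]\,T^{-1}=T\,c([s,s']_{H})\,T^{-1}=c\big(\Phi([s,s']_{H})\big),
$$
and since $c$ is faithful on sections this yields $\Phi([s,s']_{H})=[\Phi(s),\Phi(s')]_{\hat H}$. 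Thus $\Phi$ is the required isomorphism of twisted Courant algebroids, compatible with $T$, with all genuine difficulty concentrated in the middle step of verifying that conjugation by $T$ preserves the class of Clifford multiplications by sections and respects the basic subbundles.
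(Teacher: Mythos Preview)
Your proposal is correct and ultimately identifies the same isomorphism as the paper, namely the Weil-model restriction of the Cavalcanti--Gualtieri map $\tau$; but your route to the Courant-algebroid properties is genuinely different. The paper's proof is a one-line reduction: having already set up (in the preceding sections) the identification of each Weil-model Courant algebroid with a sub-Courant-algebroid of the full invariant Courant algebroid on the corresponding Borel space, it simply observes that the Cavalcanti--Gualtieri isomorphism \eqref{cavgual} restricts to these subspaces, and all structural properties (form, bracket, compatibility with the Clifford module and $T$) are inherited from \cite{CG}. You instead work intrinsically in the Weil model: you \emph{define} the map by conjugating Clifford multiplication through $T$, and then recover preservation of the pairing and the twisted bracket formally from the Clifford anticommutator and the derived-bracket identity \eqref{twistedbracket}, together with the chain-map property $T\,d_H=d_{\hat H}\,T$. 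Your approach buys self-containment and makes the compatibility with $T$ tautological; the paper's buys brevity by outsourcing the structural verifications to \cite{CG}. The one place to be careful in your write-up is that the ``swap'' should be expressed in terms of the connections $A,\hat A$ (equivalently $Y_A,Y_{\hat A}$) rather than the bare Weil generators $\theta,\hat\theta$, since $A=f\theta+\alpha$ mixes $\theta$ with a genuine $1$-form on $M$; you already note the dictionary, but the basicity bookkeeping really lives in that change of frame.
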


\begin{proof} Under the identification of $\Gamma\bigg( \big((TM \oplus \mathbb{I}_{\theta^*}\oplus \mathbb{I}_{\hat{\theta}^*})  \oplus (T^*M \oplus \mathbb{I}_{\theta} \oplus \mathbb{I}_{\hat{\theta}})\big)^{\TT}_{\hat{\TT}-\text{bas}} \bigg)$ with a subspace of 
$$\Gamma\bigg(\big(T(E\TT \times M_{\hat{\TT}}) \oplus T(E\TT \times M_{\hat{\TT}})\big) / \TT\bigg),$$
and similarly for $\Gamma\bigg(\big((TM \oplus \mathbb{I}_{\theta^*}\oplus \mathbb{I}_{\hat{\theta}^*})  \oplus (T^*M \oplus \mathbb{I}_{\theta} \oplus \mathbb{I}_{\hat{\theta}})\big)^{\hat{\TT}}_{\TT-\text{bas}} \bigg)$, this follows from the Cavalcanti-Gualtieri isomorphism \eqref{cavgual}. \end{proof}

\section{Concluding remarks}
In this paper, we have significantly generalized the topological study of T-duality of $S^1$-manifolds \cite{BEM, BHM, BS, MW12}, to the case of arbitrary circle actions, where both the spacetime $X$ and the T-dual spacetime $\hat X$ can be singular spaces with nontrivial fluxes. One direction for future study is to replace circle actions with higher dimensional torus actions with nontrivial fixed-point sets. Another direction is to consider an analogue of the results of \cite{LM15} in the setting of non-free circle actions. In that paper, we showed that if $X$ and $\hat{X}$ are a T-dual pair of principal circle bundles over the same base with T-dual fluxes $H$ and $\hat{H}$, there is a T-duality isomorphism between appropriate vertex algebras coming from a twisted version of the chiral de Rham complexes on $X$ and $\hat{X}$. We expect that in the case of non-free circle actions, there will be a similar isomorphism of twisted equivariant chiral de Rham complexes, and twisted chiral equivariant cohomology. This is a $\mathbb{Z}_2$-graded version of the chiral equivariant cohomology which was introduced in \cite{LL07}, and the twist comes from replacing the differential $d$ by $d+H$, as in the Weil model of twisted equivariant cohomology.

\end{document}